\newenvironment*{GeneralCase}
    {\par\noindent\textbf{General Case:}\itshape} 
    {\par} 
\newenvironment*{QubitCase}
    {\par\noindent\textbf{Qubit Case:}\itshape} 
    {\par} 
\newtheorem{observation}{Observation}
\newtheorem*{proposition}{Proposition}
\theoremstyle{plain}
\newtheorem{thm}{\protect\theoremname}
\theoremstyle{definition}
\newtheorem{example}{Example}
\date{\today}
\theoremstyle{slantedtheorem}
\newtheorem*{lemma*}{Lemma}
\newtheorem{corollary}{Corollary}
\providecommand{\theoremname}{Theorem}
\begin{document}
\title{Optimal Local  Measurements in Single-Parameter Quantum Metrology}
\author{Jia-Xuan Liu}
\thanks{These two authors contributed equally.}
\affiliation{Hefei National Research Center for Physical Sciences at the Microscale
and School of Physical Sciences, Department of Modern Physics, University
of Science and Technology of China, Hefei, Anhui 230026, China}
\author{Jing Yang\href{https://orcid.org/0000-0002-3588-0832}{\includegraphics[scale=0.05]{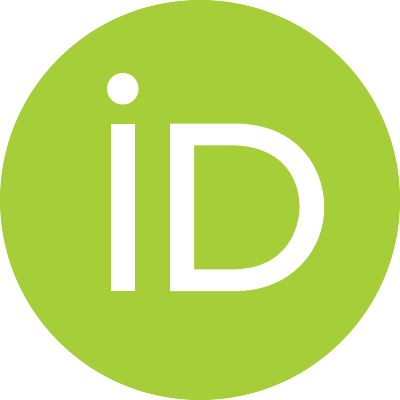}}}
\thanks{These two authors contributed equally}
\email{jing.yang@su.se}

\affiliation{Nordita, KTH Royal Institute of Technology and Stockholm University,
Hannes Alfv\'ens vag 12, 106 91 Stockholm, Sweden}
\author{Hai-Long Shi}
\affiliation{QSTAR, INO-CNR, and LENS, Largo Enrico Fermi 2, 50125 Firenze, Italy}
\affiliation{Innovation Academy for Precision Measurement Science and Technology,
Chinese Academy of Sciences, Wuhan 430071, China}
\affiliation{Hefei National Laboratory, University of Science and Technology of
China, Hefei 230088, China}
\author{Sixia Yu}
\email{yusixia@ustc.edu.cn}

\affiliation{Hefei National Research Center for Physical Sciences at the Microscale
and School of Physical Sciences, Department of Modern Physics, University
of Science and Technology of China, Hefei, Anhui 230026, China}
\affiliation{Hefei National Laboratory, University of Science and Technology of
China, Hefei 230088, China}
\begin{abstract}
Quantum measurement plays a crucial role in quantum metrology. 
Due to the limitations of experimental capabilities, collectively measuring multiple copies of probing systems can present significant challenges. Therefore, the concept of locality in quantum measurements must be considered. In this work, we investigate the possibility of achieving the Quantum Cram\'er-Rao Bound (QCRB) through local measurements (LM). 
We first demonstrate that
if there exists a LM to saturate the QCRB for qubit systems, then we can construct another rank-1 local projective measurement to saturate the QCRB.
In this sense, rank-1 local projective measurements are sufficient to analyze the problem of saturating the QCRB.
For pure qubits, we propose two necessary and sufficient methods to determine whether and how a given parameter estimation model can achieve QCRB through LM. 
The first method, dubbed iterative matrix partition method (IMP) and based on unitary transformations that render the diagonal entries of a tracless matrix vanish, elucidates the underlying mathematical structure of LM as well as the local measurements with classical communications (LMCC), generalizing the result by \href{http://10.1088/2058-9565/ab71f8}{[Zhou et al Quantum Sci. Technol. 5, 025005 (2020)]}, which only holds for the later case.
We clarify that the saturation of QCRB through LM for the GHZ-encoded states is actually due to the self-similar structure in this approach.
The second method, dubbed hierarchy of orthogonality conditions (HOC) and based on the parametrization of rank-1 measurements for qubit systems,  allows us to construct several examples of saturating QCRB, including the three-qubit W states and $N$-qubit W states ($N \geq 3$).
Our findings offer insights into achieving optimal performance in quantum metrology when measurement resources are limited.
\end{abstract}
\maketitle

\renewcommand{\figurename}{FIG.}

\section{introduction}
Locality plays a crucial role in various
branches of physics, encompassing high energy physics \cite{Sterman1993,huang2010quantum}, condensed matter physics \cite{PhysRevB.72.045141,PhysRevB.82.155138} and quantum information
theory \cite{doi:10.1126/science.1121541,yang2022minimumtimequantumcontrolquantum,(Anthony)Chen_2023}. In the context of many-body systems, locality gives rise to the Lieb-Robinson bound \cite{Lieb1972,Nachtergaele_2008,PhysRevLett.97.050401}, which sets
an upper limit on the spread of local operators. Despite the recent resurgence of interest in quantum metrology using many-body Hamiltonians \cite{PhysRevLett.98.090401,PhysRevLett.100.220501,PhysRevLett.119.010403,PhysRevResearch.4.013133,PhysRevLett.128.160505}, the investigation of locality in
the sensing Hamiltonian has only been undertaken until recently \cite{PhysRevLett.132.100803,Yin2024heisenberglimited,PhysRevLett.130.170801}.

On the other hand, it is well known in quantum metrology theory that the Quantum Cram\'er-Rao Bound (QCRB) sets an upper limit on the achievable precision of measurements. For single-parameter estimation, the eigenstates of the symmetric logarithmic derivative (SLD) operator serve as the optimal measurements, and this limit is always achievable. However, the high degeneracy of the SLD operator, along with the fact that in most cases the eigenspace consists of non-local states, makes high-precision experimental implementation a significant challenge. A more practically relevant question is whether there exists a specific SLD operator such that all its eigenstates are local (i.e., completely separable), thereby forming an local optimal measurement. We refer to a metrological model that meets this requirement as \textit{locally optimal achievable} or having \textit{local optimal achievability}. Considering a non-interacting encoding Hamiltonian $H_{\lambda} = \lambda \sum_{j} Z_{j}$, where $Z_{j}$ is the Pauli $Z$ operator acting on the $j$-th qubit. Ref. \cite{PhysRevLett.96.010401} points out that when the system is prepared in a Greenberger-Horne-Zeilinger (GHZ) state, it attains maximum precision and is also locally optimal achievable. However, to our knowledge, little is known about whether other metrological models can also be locally optimal achievable. Furthermore, for pure states, Zhou et al~\cite{Zhou_2020} demonstrated that rank-1 and projective local  measurements with classical communication (LMCC) can be constructed to saturate the QCRB. However, taking into account classical communication among particles, the total number of optimal measurement bases for general high-dimensional encoding states, such as $N$-qubit states, grows exponentially with the number of particles. This could require exponentially large experimental resources, making it often infeasible. In summary, local measurements in quantum metrology remain largely uncharted at both fundamental and experimental levels.
  
In this paper, we provide a comprehensive study of identifying QCRB-saturating local  measurements (LM) for a given parameter estimation model. In contrast with LMCC, LM does not require the classical communications between different local observables and can reduce the measurement resources dramatically, if it exists .  We first determine whether the model is locally optimal achievable; if it is, we then provide the corresponding local optimal  measurements. By applying our theory, we demonstrate the local optimal achievability in various metrological scenarios, including the case of two-qubit encoded pure states, where the characteristic matrix $M$ that defines the saturation of the QCRB based on encoded states is either a diagonal matrix (with all off-diagonal elements being zero) or a zero-diagonal matrix (with all diagonal elements being zero), as well as some special encoded cases using W states as probe states. We construct an example where local optimal achievability is not possible, reconstructing the existing conclusion that LMCC can universally reach optimality and pointing out that the universality of local optimal achievability can be maintained at most for two qubits in qubit systems.

The organization of this paper is as follows: We first review the background and basic concepts of quantum metrology and quantum measurements in Sec. \ref{section 2}. In Sec. \ref{section 3A}, we prove that for qubit systems, If a metrological model is locally optimal achievable, we can always construct a rank-1 local optimal measurement. In Sect. \ref{section 3b}, we describe how the first  method, dubbed iterative matrix partition (IMP) method, can be used to study local optimal achievability. In Sect. \ref{section 3c}, we point out that the IMP method, combined with a simple observation, can easily reconstruct the optimal achievability of LMCC given in Ref. \cite{Zhou_2020}. In Sect. \ref{section 3d}, we demonstrate how the IMP method reveals the limitations of universally saturating the QCRB in the LM scenario, providing two examples related to the structure of the matrix $M$.  In Sec.~\ref{section 4a} and Sec.~\ref{section 4b}, we introduce a second method, dubbed hierarchy of orthogonality conditions (HOC) method, for investigating local optimal achievability within the general encoding framework and the unitary phase encoding framework, respectively. In Sec. \ref{section 4c}, we present several applications of the HOC method, including the three-qubit W state and the $N$-qubit W state. Finally, we summarize our work in Sec. \ref{conclusion}.

\section{BACKGROUND}\label{section 2}

In quantum metrology, the general procedure for parameter estimation begins with the preparation of a probe state $\rho_{0}$, which is independent of the parameter to be estimated. After undergoing parameter-dependent dynamics $\mathcal{L}_{\lambda}$, we obtain the encoded state $\rho(\lambda) = \mathcal{L}_{\lambda}(\rho_{0})$. Following this, a positive operator-valued measurement (POVM) is performed using a set of positive semi-definite Hermitian operators $\left\{E_{x} \equiv K_{x}^{\dagger}K_{x}\right\}$ that satisfy $\sum_{x} E_{x} = I$ \cite{Liu_2020}. Based on the measurement results, we obtain an estimator for the parameter $\lambda$, denoted as $\hat{\lambda}_{\mathrm{est}}$. If we label the mean value of the estimator as $\langle \hat{\lambda}_{\mathrm{est}} \rangle$ and its variance as $\left(\Delta \lambda_{\mathrm{est}}\right)^{2}$, then for an unbiased estimation, we have $\langle \hat{\lambda}_{\mathrm{est}} \rangle - \lambda = 0$,  
\begin{equation}\label{QCR bound}
    \left(\Delta \lambda_{\mathrm{est}}\right)^{2}\ge \frac{1}{m\mathcal{F}_{C}}\ge \frac{1}{m\mathcal{F}_{Q}},
\end{equation}
where 
\begin{equation}
    F_{C}(\lambda)  \equiv \sum_{x: p(x |\lambda) \neq 0}\frac{\left[\partial_{\lambda} p(x | \lambda)\right]^{2}}{p(x | \lambda)}+\sum_{x:p(x|\lambda)=0}\lim_{p(x|\lambda) \to 0} \frac{\left[\partial_{\lambda} p(x | \lambda)\right]^{2}}{p(x | \lambda)}
\end{equation}
is the classical Fisher information (CFI), which quantifies the parameter information in the conditional probability $p(x|\lambda)=\operatorname{Tr}\left[\rho(\lambda)E_{x}\right]$ resulting from a POVM on $\rho(\lambda)$. $m$ denotes the number of repetitions of the experiment, and
\begin{equation}
\mathcal{F}_{Q}\equiv \sum_{x}\operatorname{Tr}\left[E_{x}L_{\lambda}\rho(\lambda) L_{\lambda}\right]=\operatorname{Tr}\left[\rho(\lambda) L^{2}_{\lambda}\right]
\end{equation}
is the quantum Fisher information (QFI), where we define the SLD operator $L_{\lambda}$, which satisfies $L_{\lambda}^{\dagger} = L_{\lambda}$, as
\begin{equation}\label{SLD op}
      \frac{\partial \rho(\lambda)}{\partial \lambda} = \frac{1}{2} \left[\rho(\lambda)L_{\lambda} + L_{\lambda}\rho(\lambda)\right].
  \end{equation}

The first bound in Eq. \eqref{QCR bound} is saturated in the asymptotic limit as $m \to \infty$, while the second bound, known as the Quantum Cram\'er-Rao Bound (QCRB) \cite{PhysRevLett.72.3439}, depends on the chosen quantum measurement for its saturation. Measurements that achieve the saturation of the QCRB are referred to as optimal measurements. Previous results indicate that the projective measurements formed by the eigenstates of $L_{\lambda}$
  saturate the QCRB. However, for any Hermitian operator $S$,  considering that Eq. \eqref{SLD op} remains invariant under the transformation $L_{\lambda} \to L_{\lambda} + (1 - P_{\rho_{\lambda}})S(1 - P_{\rho_{\lambda}})$, where $P_{\rho_{\lambda}}$ is the projector to the support of $\rho_{\lambda}$. This suggests that $L_{\lambda}$
  is not unique, and thus the optimal measurements that saturate the QCRB are also not unique. Given the experimental challenges associated with non-local (collective) measurements, our attention in what follows will be focused on the existence of LM or LMCC.

Throughout this work, we shall consider qubit systems. 
For $N$-qubits, $\left\{E_{x}\right\}$ is called a LM if we can
decompose each $E_{x}$ as
\begin{equation}
E_{x}=\bigotimes_{i=1}^{N}\mathcal{E}_{x_{i}}^{(i)},
\end{equation}
where $\mathcal{E}_{x_{i}}^{(i)}$ is the measurement operator that acts locally on the $i$-th qubit and outputs $x_{i}$, satisfying $\sum_{x_{i}} \mathcal{E}_{x_{i}}^{(i)} = I^{(i)}$. LMCC refers to allowing classical communication between different qubits, where the measurement choice of the $i$-th qubit can depend on the measurement outcomes of the first $i-1$ qubits, i.e.,
\begin{equation}
E_{x}=\mathcal{E}_{x_{1}}^{(1)}\bigotimes_{i=2}^{N} \mathcal{E}_{x_{i};x_{1}\cdots x_{i-1}}^{(i)}.
\end{equation}

The necessary and sufficient condition for QCRB saturation, which was first provided by Caves \cite{PhysRevLett.72.3439}, was later reformulated by Zhou et al. \cite{Zhou_2020} into a new, more easily computable equivalent form using the Choi-Jamiolkowski duality \cite{PhysRevA.87.022310}. However, since when $p(x|\lambda) = 0$, applying L'H\^opital's rule shows that $\lim_{p(x|\lambda) \to 0} \frac{\left[\partial_{\lambda} p(x | \lambda)\right]^{2}}{p(x | \lambda)} = \operatorname{Tr} E_{x} L_{\lambda} \rho L_{\lambda}$ automatically saturates the QCRB without requiring additional conditions \cite{PhysRevA.100.032104}. Thus, for a pure state $\rho(\lambda) = |\psi(\lambda)\rangle \langle \psi(\lambda)|$, the condition that saturates the QCRB can be simplified to, compared to Zhou et al~\cite{Zhou_2020}: \textit{the $\mathrm{QCRB}$ is saturated if and only if 
\begin{equation}\label{ExMEx}
    E_{x} M E_{x} = 0
\end{equation}
holds for $\forall x$, where $M \equiv \left[|\psi(\lambda)\rangle\langle\psi(\lambda)|, L_{\lambda}\right]$.}

\begin{figure*}[t]
\begin{centering}
\includegraphics[scale=0.25]{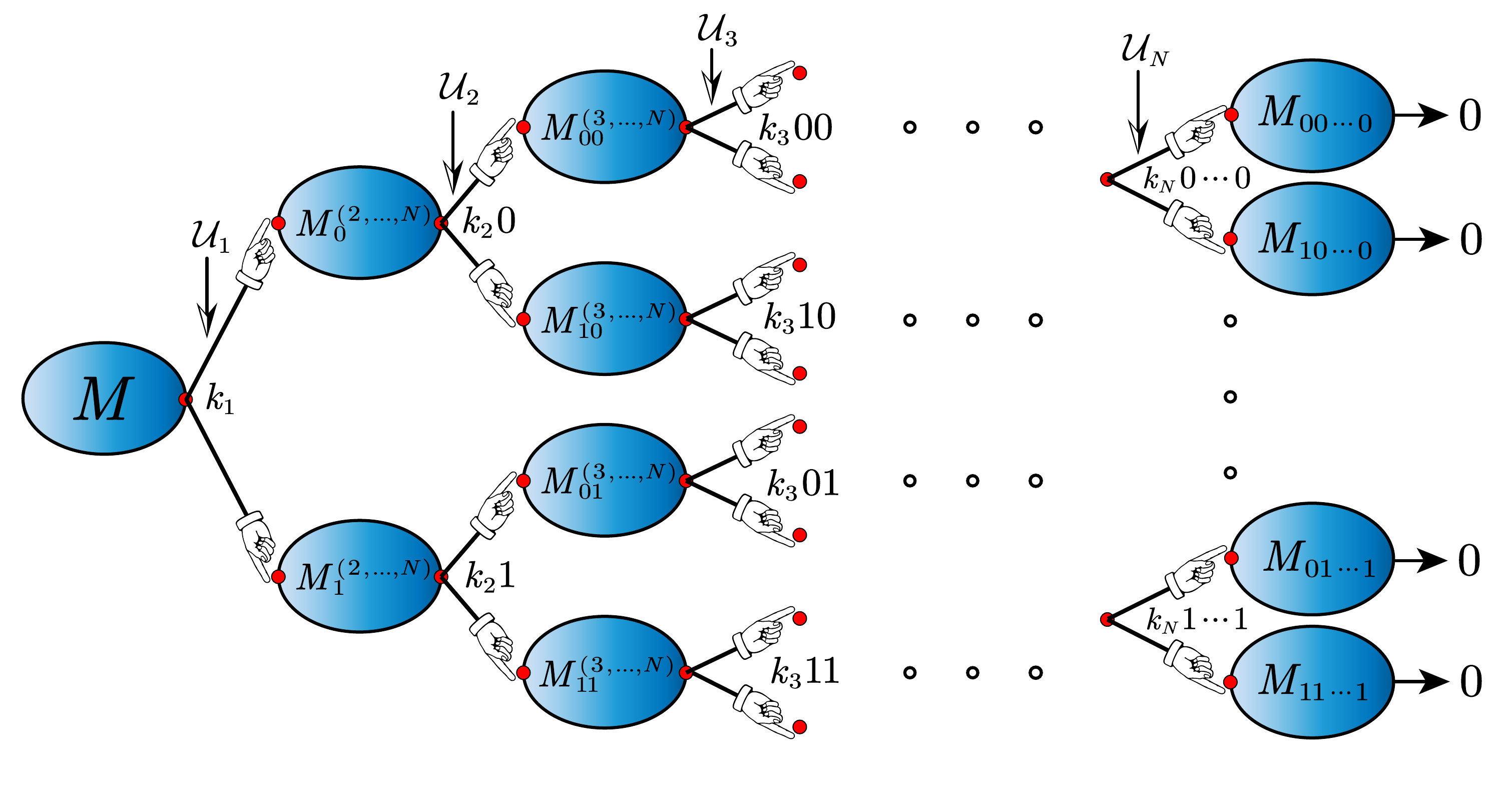}
\par\end{centering}
\caption{\label{fig:IMP}Schematic illustration of the iterative matrix partition (IMP) method, see detailed description in Sec.~\ref{section 3}. }
\end{figure*}

\vspace{0.9cm}
\section{IMP METHED}\label{section 3}
While we do not impose any additional restrictions on the selected POVM a priori, in practice, considering only rank-$1$ POVM is sufficient. This is because for each rank-$n$ ($n\ge 2$) POVM operator $E_{x}=\sum_{i=1}^{n}e_{x,i}E_{x,i}=\sum_{i=1}^{n}e_{x,i}|\Pi_{x,i}\rangle \langle \Pi_{x,i}|$, we can construct $n$ new rank-$1$ POVM operators $E_{x,i}^{\prime}\equiv e_{x,i}E_{x,i}$ as replacements. Clearly, $E_{x}ME_{x}=0$ implies that $E_{x,k}^{\prime}ME_{x,k}^{\prime}=0$, $\forall  k$, which means that rank-$1$ POVM are enough to consider~\cite{PhysRevA.100.032104}. Furthermore, if there exists a local POVM that saturates the QCRB and includes rank-$n$ ($n\ge 2$) POVM operators, we can always decompose the rank-$n$ POVM operators into $n$ rank-$1$ local POVM operators, thereby providing a new rank-$1$ local POVM that saturates the QCRB. In other words, in the context of LM, considering rank-$1$ POVM is sufficient. Now we shall further argue for qubit systems, it is sufficient to consider rank-$1$ projective measurements.

\subsection{Rank-$1$ projective LM are sufficient for qubit systems}\label{section 3A}

For an $N$-qubit system, we observe that if there exists an local optimal non-projective measurement (outcome greater than $2$), then there must simultaneously exist binary local optimal  projective measurements, which means that considering rank-$1$ local projective measurements is sufficient. To see this, first note that a local rank-$1$ POVM can be written as $E_{x}=\bigotimes _{i=1}^{N}\mathcal{E}_{x_{i}}^{(i)}$, $\mathcal{E}_{x_{i}}^{(i)}=a_{x_{i}}^{(i)}\left(I^{(i)}+\boldsymbol{n}_{x_{i}}^{(i)} \cdot \boldsymbol{\sigma}^{(i)}\right) / 2 $ where $\boldsymbol{n}_{x_{i}}^{(i)}$ is the Bloch vector for the $i$-th qubit, $\boldsymbol{\sigma}^{(i)} \equiv (X_{i}, Y_{i}, Z_{i})$, and $I^{(i)}$ represent the Pauli operators and the identity operator for the $i$-th qubit, respectively. Then we obtain 
\begin{equation}
    \begin{array}{l}
a_{x_{i}}^{(i)} \neq 0,|\boldsymbol{n}_{x_{i}}^{(i)}|=1, \:\forall i.\\
\sum_{x_{i}} \mathcal{E}_{x_{i}}^{(i)}=I^{(i)} \Rightarrow \sum_{x_{i}} a_{x_{i}}^{(i)}=2, \sum_{x_{i}} \boldsymbol{n}_{x_{i}}^{(i)}=0 .
\end{array}
\end{equation}
The condition for QCRB-saturating local measurement reads
\begin{equation}
    \left[\bigotimes_{i=1}^{N} \mathcal{E}_{x_{i}}^{(i)}\right] M\left[\bigotimes_{i=1}^{N} \mathcal{E}_{x_{i}}^{(i)}\right]=0\Rightarrow \operatorname{Tr} M\left[\bigotimes_{i=1}^{N} \mathcal{E}_{x_{i}}^{(i)}\right]=0.
\end{equation}
For the $i$-th qubit, by summing over all the other $x_{j}$ ($j\neq i$), we have
\begin{equation}\label{TrMrsigma}
    \frac{1}{2} \operatorname{Tr} M a_{x_{i}}^{(i)}\left(I^{(i)}+\boldsymbol{n}_{x_{i}}^{(i)} \cdot \boldsymbol{\sigma}^{(i)}\right) =0\Rightarrow \operatorname{Tr} M\boldsymbol{n}_{x_{i}}^{(i)} \cdot \boldsymbol{\sigma}^{(i)}=0,
\end{equation}
where we have used the result $\operatorname{Tr}M=0$. Further considering qubits $k$ and $l$, combining with Eq. \eqref{TrMrsigma}
 we can show that
\begin{equation}
    \operatorname{Tr} M \mathcal{E}_{x_{k}}^{(k)}\otimes  \mathcal{E}_{x_{l}}^{(l)}=0 \Rightarrow \operatorname{Tr} M \left(\boldsymbol{n}_{x_{k}}^{(k)} \cdot \boldsymbol{\sigma}^{(k)}\right)\left( \boldsymbol{n}_{x_{k}}^{(l)} \cdot \boldsymbol{\sigma}^{(l)}\right)=0.
\end{equation}
By the induction method, 
\begin{equation}\label{TrMrsigma0}
    \operatorname{Tr} M \bigotimes_{k \in \alpha} \boldsymbol{n}_{x_{k}}^{(k)} \cdot \boldsymbol{\sigma}^{(k)}=0, \quad \forall \alpha \subseteq\{1,2, \ldots, N\}, \quad \forall x_{k}.
\end{equation}
Define local projector as $\tilde{\mathcal{E}}_{ \pm}^{(i)}=\left(I^{(i)} \pm \boldsymbol{n}_{1}^{(i)} \cdot \boldsymbol{\sigma}^{(i)}\right) / 2$, then by condition \eqref{TrMrsigma0} we can show that
\begin{equation}
    \operatorname{Tr} M\bigotimes_{i=1}^{N} \tilde{\mathcal{E}}_{y_{i}}^{(i)}=0 \Rightarrow\left[\bigotimes_{i=1}^{N} \tilde{\mathcal{E}}_{y_{i}}^{(i)}\right] M\left[\bigotimes_{i=1}^{N} \tilde{\mathcal{E}}_{y_{i}}^{(i)}\right]=0, \: \forall y_{i},
\end{equation}
which confirms that $\tilde{\mathcal{E}}_{y}\equiv \bigotimes _{i=1}^{N}\tilde{\mathcal{E}}_{y_{i}}^{(i)}$ can be replaced as the new QCRB-saturating LM.

\subsection{IMP for LM}\label{section 3b}
If there exists a rank-$1$ projective LM $\mathcal{E}_{x_{i}}^{(i)}=|\pi_{x_{i}}^{(i)}\rangle\langle \pi_{x_{i}}^{(i)}|$, where $x_{i}\in \{0,1\}$, that saturates the QCRB, then it satisfies $\left[\bigotimes_{i=1}^{N} \mathcal{E}_{x_{i}}^{(i)}\right] M\left[\bigotimes_{i=1}^{N} \mathcal{E}_{x_{i}}^{(i)}\right]=0$ for $\forall x_{i}$. Defining the unitary matrix $\mathcal{U}_{i} \equiv (|\pi_{0}^{(i)}\rangle,|\pi_{1}^{(i)}\rangle )$, we have 
\begin{equation}\label{UMU0}
    \left(\left[\bigotimes_{i=1}^{N} \mathcal{U}_{i}\right]^{\dagger} M\left[\bigotimes_{i=1}^{N} \mathcal{U}_{i}\right]\right)_{k k}=0,   \forall k,
\end{equation}
where $O_{kk}$ represents the $k$-th diagonal element of the matrix $O$. On the other hand, if Eq. \eqref{UMU0} holds, then defining local projectors from $\mathcal{U}_{i}=\left(|\eta_{0}^{(i)}\rangle,|\eta_{1}^{(i)}\rangle\right)$ and setting $|\pi_{x_{i}}^{(i)}\rangle=|\eta_{x_{i}}^{(i)}\rangle$ immediately restores condition $\left[\bigotimes_{i=1}^{N} \mathcal{E}_{x_{i}}^{(i)}\right] M\left[\bigotimes_{i=1}^{N} \mathcal{E}_{x_{i}}^{(i)}\right]=0$. Starting from Eq. \eqref{UMU0}, based on $\operatorname{Tr}M=0$ and the matrix trace remains invariant under local unitary transformations, we have developed an operational necessary and sufficient condition, which we refer to as ``\textit{iterative matrix partition}'' (IMP), to investigate the  QCRB-saturating LM in qubit systems: \textit{After the local unitary transformation, the diagonal elements of the resulting $\mathit{2\times 2}$
block matrix are still traceless}.

Specifically, the matrix $M$ undergoes a unitary transformation with respect to the $1$-st qubit, resulting in 
\begin{equation}
\mathcal{U}_{1}^{\dagger} M \mathcal{U}_{1}=\left(\begin{array}{cc}
M_{ 0}^{(2, \ldots, N)} & \times \\
\times & M_{1}^{(2, \ldots, N)}
\end{array}\right)_{[1]},
\end{equation}
where $M_{0}^{(2, \ldots, N)}$ and $M_{1}^{(2, \ldots, N)}$ are matrices defined on the subspace of qubits $\{2, 3,\ldots, N\}$, and the subscript $[j]$ denotes the block matrix representation on $j$-th qubit. Let the  diagonal block matrix $M_{k_{1}}^{(2, \ldots, N)}$, where $k_{1}\in\{0,1\}$, undergo a unitary transformation with respect to the $2$-nd qubit, yielding
\begin{equation}
    \mathcal{U}_{2}^{\dagger} M_{k_{1}}^{(2, \ldots, N)} \mathcal{U}_{2}=\left(\begin{array}{cc}
M_{k_{1} 0}^{(3, \ldots, N)} & \times \\
\times & M_{k_{1}1}^{(3, \ldots, N)}
\end{array}\right)_{[2]},
\end{equation}
where $M_{k_{1}0}^{(3, \ldots, N)}$ and $M_{k_{1}1}^{(3, \ldots, N)}$ are matrices defined on the subspace of qubits $\{3,4, \ldots, N\}$. Following this procedure iteratively until $N$ local unitary transformations are applied, we obtain $M_{k_{1}k_{2}\ldots k_{N}}$. Based on the above, the IMP requires that all the matrices obtained, $M_{0}^{(2, \ldots, N)}
 , M_{1}^{(2, \ldots, N)}
 , M_{00}^{(3, \ldots, N)}, \ldots, M_{\underbrace{11 \cdots 1}_{N}}$, are traceless. A more intuitive illustration of the above process is shown in Fig. \ref{fig:IMP}.
 
\subsection{IMP for LMCC}\label{section 3c}
Through straightforward analysis, we provide the following observation.
\begin{observation}\label{observation}
A set of traceless $\mathit{2\times 2}$ Hermitian (anti-Hermitian) matrices $\mathcal{A}_{j}=\boldsymbol{a}_{j} \cdot \boldsymbol{\sigma}$ ($\mathcal{A}_{j}=i \boldsymbol{a}_{j} \cdot \boldsymbol{\sigma}$) can be simultaneously transformed into zero-diagonal matrices by the same unitary transformation if and only if the vectors $\{\boldsymbol{a}_{j}\}$ are coplanar.
\end{observation}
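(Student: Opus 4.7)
The plan is to reduce the claim to a statement about $SO(3)$ rotations of Bloch vectors via the standard two-to-one homomorphism $SU(2)\to SO(3)$: for each $U\in SU(2)$ there is a unique $R\in SO(3)$ with $U^{\dagger}(\boldsymbol{b}\cdot\boldsymbol{\sigma})U=(R\boldsymbol{b})\cdot\boldsymbol{\sigma}$, and this map is surjective onto $SO(3)$. Since the diagonal entries of a traceless Hermitian matrix $\boldsymbol{b}\cdot\boldsymbol{\sigma}$ are $(b_{z},-b_{z})$, the matrix is zero-diagonal precisely when $b_{z}=0$, i.e.\ when $\boldsymbol{b}$ lies in the $xy$-plane. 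Thus the condition ``some common $U$ simultaneously zero-diagonalizes all the $\mathcal{A}_{j}$'' is equivalent to ``some $R\in SO(3)$ sends every $\boldsymbol{a}_{j}$ into the $xy$-plane.''

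Given this reformulation, both directions of the biconditional are short. For the ``if'' direction, I would assume the $\boldsymbol{a}_{j}$ lie in a common plane $\Pi$ through the origin with unit normal $\boldsymbol{n}$, pick any rotation $R\in SO(3)$ that sends $\boldsymbol{n}$ to $\hat{z}$ (so that $R\Pi$ is the $xy$-plane), and lift $R$ to some $U\in SU(2)$ via the surjection above; this $U$ does the job. For the ``only if'' direction, if such $U$, and hence such $R$, exists, then $\boldsymbol{a}_{j}=R^{-1}(R\boldsymbol{a}_{j})$ lies in the two-dimensional subspace $R^{-1}(xy\text{-plane})$ for every $j$, so the $\boldsymbol{a}_{j}$ are coplanar.

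The anti-Hermitian case $\mathcal{A}_{j}=i\boldsymbol{a}_{j}\cdot\boldsymbol{\sigma}$ is handled verbatim: the diagonal entries become $(ia_{z},-ia_{z})$, so zero-diagonality still amounts to $a_{z}=0$, and the rotation argument transfers without change. There is no substantive obstacle; the only point requiring a line of justification is the surjectivity of the map $SU(2)\to SO(3)$, which guarantees that any normal-aligning rotation can in fact be implemented by a conjugating unitary. A minor remark worth including is that for $|\{\boldsymbol{a}_{j}\}|\le 2$ the coplanarity condition is automatic, so the observation is substantive only when three or more vectors are involved, which matches its intended use in the IMP recursion.
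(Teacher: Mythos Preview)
Your proposal is correct and follows essentially the same route as the paper: both identify that a traceless $2\times 2$ Hermitian matrix $\boldsymbol{b}\cdot\boldsymbol{\sigma}$ has zero diagonal precisely when $b_z=0$, and then translate simultaneous zero-diagonalization into the existence of a rotation mapping the common plane of the $\boldsymbol{a}_j$ to the $xy$-plane. The only cosmetic difference is that the paper writes down the explicit aligning unitary $U=\exp\!\big\{-\tfrac{i\theta}{2}\,\tfrac{\hat{\boldsymbol{n}}\times\hat{\boldsymbol{z}}}{|\hat{\boldsymbol{n}}\times\hat{\boldsymbol{z}}|}\cdot\boldsymbol{\sigma}\big\}$ and verifies $U(\hat{\boldsymbol{n}}\cdot\boldsymbol{\sigma})U^\dagger=Z$ directly, whereas you invoke the surjectivity of $SU(2)\to SO(3)$ to guarantee such a $U$ exists; both arguments are equivalent in content.
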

The proof of Observation \ref{observation} can be found in Appendix \ref{Universality of Two-Qubit Pure States}. Based on the above observation, we immediately obtain the following corollary.
\begin{corollary}\label{corollary 1}
Any $2 \times 2$ traceless Hermitian (or anti-Hermitian) matrix is unitarily similar to a zero-diagonal matrix.
\end{corollary}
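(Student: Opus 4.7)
The plan is to deduce the corollary as the single-matrix ($j=1$) specialisation of Observation \ref{observation}. A single vector $\boldsymbol{a}_{1}\in\mathbb{R}^{3}$ is always coplanar with itself, since any vector lies in infinitely many planes through the origin. Hence the hypothesis of Observation \ref{observation} is automatically met and the forward implication immediately furnishes a unitary $U$ making $U^{\dagger}\mathcal{A}U$ zero-diagonal, which is what the corollary asserts.

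For a self-contained verification that does not route through Observation \ref{observation}, I would use the Bloch-vector picture. Any $2\times 2$ traceless Hermitian matrix can be written as $\mathcal{A}=\boldsymbol{a}\cdot\boldsymbol{\sigma}$ for some $\boldsymbol{a}\in\mathbb{R}^{3}$, and any traceless anti-Hermitian one as $\mathcal{A}=i\,\boldsymbol{a}\cdot\boldsymbol{\sigma}$. Under the two-to-one homomorphism $SU(2)\to SO(3)$, conjugation by $U\in SU(2)$ rotates the Bloch vector, $U^{\dagger}(\boldsymbol{a}\cdot\boldsymbol{\sigma})U=(R\boldsymbol{a})\cdot\boldsymbol{\sigma}$ with $R\in SO(3)$. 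Reading off the diagonal entries of $\boldsymbol{b}\cdot\boldsymbol{\sigma}$ (resp.\ $i\boldsymbol{b}\cdot\boldsymbol{\sigma}$) as $\pm b_{z}$ (resp.\ $\pm i b_{z}$), the target condition reduces to $(R\boldsymbol{a})_{z}=0$, i.e.\ $R\boldsymbol{a}$ lies in the $xy$-plane.

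It remains to note that any $\boldsymbol{a}\in\mathbb{R}^{3}$ can be rotated into the $xy$-plane: if $\boldsymbol{a}=0$ or already planar there is nothing to do; otherwise a rotation about the axis $\hat{z}\times\hat{a}/\|\hat{z}\times\hat{a}\|$ through an appropriate angle accomplishes the task. Pulling back $R$ to any $U\in SU(2)$ covering it yields the desired unitary. There is no genuine obstacle here; the essence is simply the recognition that ``one vector is always coplanar'' and the $SU(2)\to SO(3)$ correspondence. I would present the final proof as a one-line remark invoking Observation \ref{observation} at $j=1$, with a brief parenthetical noting that the anti-Hermitian case is identical because the overall factor $i$ does not affect whether the diagonal vanishes.
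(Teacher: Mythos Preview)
Your proposal is correct and matches the paper's approach exactly: the paper derives Corollary~\ref{corollary 1} as an immediate consequence of Observation~\ref{observation}, precisely because a single vector is trivially coplanar. Your supplementary Bloch-vector argument is also essentially the content of the paper's Appendix~\ref{Universality of Two-Qubit Pure States} proof of Observation~\ref{observation}, so there is nothing further to add.
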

Now, based on Corollary \ref{corollary 1}, we will reproduce the results from \cite{Zhou_2020} starting from IMP: \textit{The rank-$1$ projective $\mathrm{LMCC}$ saturates the $\mathrm{QCRB}$ for pure states}. 
According to the Corollary \ref{corollary 1}, we can always find $\mathcal{U}_{1}$ such that $\operatorname{Tr} M_{0}^{(2, \ldots, N)}=\operatorname{Tr} M_{1}^{(2, \ldots, N)}=0$. For the different measurement results of the $1$-st qubit, $0$ and $1$, Corollary \ref{corollary 1} indicates that we can also find the second unitary $\mathcal{U}_{2;0}$ such that $\operatorname{Tr} M_{00}^{(3, \ldots, N)}=\operatorname{Tr} M_{01}^{(3, \ldots, N)}=0$ and $\mathcal{U}_{2;1}$ such that $\operatorname{Tr} M_{10}^{(3, \ldots, N)}=\operatorname{Tr} M_{11}^{(3, \ldots, N)}=0$, respectively. Continuing in this manner, with each iteration relying on the previous $k-1$ qubits' measurement outcomes, we simply need to unitarily transform a traceless anti-Hermitian $2\times 2$ block matrix into a zero-diagonal $2\times 2$ block matrix, which, according to the lemma, can always be achieved. Finally, we can find $1+2+2^{2}+\cdots +2^{N-1}=2^{N}-1$ local unitary operators, $\mathcal{U}_{1}, \mathcal{U}_{2;0},\mathcal{U}_{2;1},\mathcal{U}_{3;00},\ldots, \mathcal{U}_{N;\underbrace{11 \cdots 1}_{N-1}}$. They constitute rank-$1$ projective LMCC.

\subsection{The application of IMP}\label{section 3d}
Given that any two vectors in $\mathbb{R}^3$
  space originating from the origin are obviously coplanar, combined with Observation \ref{observation}, we obtain
\begin{corollary}\label{corollary 2}
    Any two Hermitian (or anti-Hermitian) matrices can be transformed simultaneously into zero-diagonal matrices through the same unitary transformation.
\end{corollary}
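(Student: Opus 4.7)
The plan is to derive Corollary 2 as an immediate consequence of Observation 1, together with an elementary fact from Euclidean geometry. Since unitary conjugation preserves the trace, the claim implicitly concerns traceless $2\times 2$ matrices, for which the Pauli parametrization $\mathcal{A}_j = \boldsymbol{a}_j\cdot\boldsymbol{\sigma}$ (Hermitian case) or $\mathcal{A}_j = i\,\boldsymbol{a}_j\cdot\boldsymbol{\sigma}$ (anti-Hermitian case) applies.

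First, I would recall Observation 1, which asserts that a family $\{\mathcal{A}_j\}$ of such matrices is simultaneously unitarily equivalent to zero-diagonal matrices iff the associated Bloch vectors $\{\boldsymbol{a}_j\}$ in $\mathbb{R}^3$ are coplanar (i.e., lie in a common 2-plane through the origin). Second, I would observe that for only two vectors $\boldsymbol{a}_1,\boldsymbol{a}_2\in\mathbb{R}^3$ emanating from the origin, the span $\mathrm{span}\{\boldsymbol{a}_1,\boldsymbol{a}_2\}$ has dimension at most two, so the two vectors are trivially coplanar: if they are linearly independent they themselves define a plane, and if linearly dependent they lie on a line contained in infinitely many planes. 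Hence the coplanarity hypothesis of Observation 1 is automatically fulfilled, which immediately yields the existence of the desired simultaneous unitary transformation.

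There is essentially no obstacle: the content of the corollary is entirely carried by Observation 1, and the argument is reduced to the trivial remark that two vectors in $\mathbb{R}^3$ always span at most a plane. The only thing worth emphasizing is that the $2\times 2$ traceless assumption is a consistency condition (necessary for a zero-diagonal target to exist under unitary conjugation), not an additional hypothesis beyond what is encoded in the Pauli decomposition used by Observation 1. Thus the corollary follows in one line.
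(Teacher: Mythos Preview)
Your proposal is correct and matches the paper's own argument essentially verbatim: the paper simply notes that any two vectors in $\mathbb{R}^3$ originating from the origin are obviously coplanar and then invokes Observation~\ref{observation}. Your additional remarks about the span having dimension at most two and about the implicit tracelessness assumption are fine elaborations but add nothing the paper does not already take for granted.
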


For pure states, the essential difference in saturating the QCRB between LM and LMCC arises in the IMP method due to multiple traceless matrices becoming zero-diagonal under the same unitary transformation, as opposed to a single traceless matrix becoming zero-diagonal under a unitary transformation. The former is only satisfied in a few cases, while the latter is more universal. For the former, we can further infer that \textit{if a single unitary matrix exists such that $\textit{K}$ $\mathit{2\times 2}$ traceless matrices can be simultaneously transformed into zero-diagonal matrices, then the $\mathrm{QCRB}$ is saturated for a pure $\mathit{\left(\log_{2}{K}+1\right)}$-qubit state}. The universal saturation property for $K=1$ has been proven by Corollary \ref{corollary 1}, corresponding to trivial single-qubit pure states and nontrivial LMCC. Corollary \ref{corollary 2} confirms the universality in the case of $K=2$, which means that two-qubit pure encoding states are always locally optimally achievable. Considering that three or more vectors in $\mathbb{R}^3$
  are not necessarily coplanar, the universality fails for $K \geq 4$. Thus, whether the encoding states of an $N$-qubit pure state system ($N \geq 3$) are locally optimally achievable depends on the encoding states themselves.

In addition, IMP can also directly identify the local optimal measurement based on the matrix structure. For instance, if $M$ is already in a zero-diagonal form, selecting $\mathcal{U}_{1} = \mathcal{U}_{2} = \cdots = \mathcal{U}_{N} = I$ would satisfy the conditions for IMP, indicating that the computational basis itself constitutes the local optimal measurement for saturating the QCRB.

Another typical class of $M$ matrices is the diagonal matrices. It is not difficult to imagine that for all $i=1,2,\ldots,N$, the block matrices obtained after the $i$-th unitary transformation, $M_{k_{1}\ldots k_{i-1}0}^{(i+1,\ldots,N)}$ and $M_{k_{1}\ldots k_{i-1}1}^{(i+1,\ldots,N)}$, remain diagonal. If a traceless matrix is also diagonal, a uniform zero-diagonal unitary transformation is the Hadamard transformation $\mathcal{H}_{i}=(X_{i}+Z_{i})/\sqrt{2}$. This is because the block diagonal matrix $S = Z_{1} \otimes A_{2\cdots N} + I^{(1)} \otimes B_{2\cdots N}$ ($A_{2\cdots N}$  and $B_{2\cdots N}$ represent operators in the space of the qubit set $\{2, 3, \ldots, N\}$), which satisfies $\operatorname{Tr} S = \operatorname{Tr} B_{2\cdots N} = 0$, transforms under local operations to $\tilde{S} = \mathcal{H}_{1} S \mathcal{H}_{1} = X_{1} \otimes A_{2\cdots N} + I^{(1)} \otimes B_{2\cdots N}$, satisfying $\operatorname{Tr} \tilde{S}_{0}^{(2,\ldots,N)} = \operatorname{Tr} \tilde{S}_{1}^{(2,\ldots,N)} =\operatorname{Tr}B_{2\cdots N}= 0$, where $\tilde{S}_{0}^{(2,\ldots,N)}$ and $\tilde{S}_{1}^{(2,\ldots,N)}$ are two block diagonal matrices represented in the basis of the $1$-st qubit. Therefore, by choosing $\mathcal{U}_{1} = \mathcal{U}_{2} = \cdots = \mathcal{U}_{N} = \mathcal{H}$, the IMP can be completed, indicating that when the $M$ matrix is diagonal, the optimal LM for saturating the QCRB is $E_{x} = \bigotimes_{i=1}^{N} |\pm^{(i)} \rangle \langle \pm^{(i)}|$, where $|\pm^{(i)}\rangle =\frac{1}{\sqrt{2}}(|0^{(i)}\rangle+|1^{(i)}\rangle)$. As we already know, the optimal LM that saturates the QCRB for the GHZ-like state $|G(\lambda)\rangle=(|00\cdots 0\rangle+e^{-iN\lambda}|11\cdots1\rangle)/{\sqrt{2}}$ is also $E_{x}=\bigotimes_{i=1}^{N}|\pm^{(i)} \rangle \langle\pm ^{(i)}|$. This consistency arises from the following theorem:
\begin{thm}\label{GHZ example}
The $M$ matrix given by a $N$-qubit entangled encoded state $|\psi(\lambda)\rangle$ is diagonal iff $|\psi(\lambda)\rangle$ is a $\mathrm{GHZ}$-type encoded state $|G(\lambda)\rangle=\frac{1}{\sqrt{2}}(|a_{1}a_{2}\cdots a_{n}\rangle+e^{-i\theta(\lambda)}|b_{1}b_{2}\cdots b_{n}\rangle)$, where $\langle a_{i}|b_{i}\rangle =0$, $\forall i$.    
\end{thm}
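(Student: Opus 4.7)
The plan is to prove both directions separately. For the easy direction, GHZ-type $\Rightarrow$ $M$ diagonal, I would work in the local product basis $\{|a_i\rangle,|b_i\rangle\}$ on each qubit, so that $|A\rangle\equiv|a_1\cdots a_N\rangle$ and $|B\rangle\equiv|b_1\cdots b_N\rangle$ become two orthogonal computational basis vectors. Differentiating $|G(\lambda)\rangle$, orthogonalizing against $|G\rangle$ to enforce the gauge $\langle G|\partial_\lambda G\rangle=0$, and substituting into $M = 2\bigl(|G\rangle\langle\partial_\lambda G|-|\partial_\lambda G\rangle\langle G|\bigr)$ should produce $M = -i\theta'(\lambda)\bigl(|A\rangle\langle A|-|B\rangle\langle B|\bigr)$, which is manifestly diagonal.

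For the harder direction, $M$ diagonal $\Rightarrow$ GHZ-type, I would exploit structural properties of $M$. The formula above shows $M$ is anti-Hermitian with rank at most $2$, and it is traceless. Together with the diagonality assumption, this forces every nontrivial $M$ to have exactly two nonzero entries $\pm i\nu$ ($\nu\neq 0$) sitting on the diagonal at two computational basis positions $A$ and $B$. Its range is then $\mathrm{span}(|A\rangle,|B\rangle)$, and since this range also coincides with $\mathrm{span}(|\psi\rangle,|\partial_\lambda\psi\rangle)$ one has $|\psi\rangle = c_A|A\rangle + c_B|B\rangle$ with $|c_A|^2+|c_B|^2 = 1$. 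The gauge condition then fixes $|\partial_\lambda\psi\rangle = \mu\bigl(-\bar{c}_B|A\rangle + \bar{c}_A|B\rangle\bigr)$ for some $\mu=\mu(\lambda)$; imposing that the $(A,B)$ off-diagonal entry of $M$ vanish yields $\bar{\mu}\, c_A^2 + \mu\,\bar{c}_B^{2} = 0$, and comparing moduli (with $\mu\neq 0$) gives $|c_A|=|c_B|=1/\sqrt{2}$. Absorbing an overall phase finally produces $|\psi(\lambda)\rangle = \tfrac{1}{\sqrt{2}}\bigl(|A\rangle + e^{-i\theta(\lambda)}|B\rangle\bigr)$. The entanglement hypothesis is invoked last: if $A$ and $B$ agreed at some position $i$, i.e., $a_i = b_i$, the state would factor as $|a_i\rangle\otimes\tfrac{1}{\sqrt{2}}\bigl(|A_{\bar{i}}\rangle + e^{-i\theta}|B_{\bar{i}}\rangle\bigr)$, leaving qubit $i$ separable from the rest and contradicting genuine $N$-qubit entanglement. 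Hence $\langle a_i|b_i\rangle=0$ for every $i$.

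I expect the main obstacle to be the step pinning the range of $M$ down to being spanned by two \emph{computational-basis} vectors rather than by generic vectors in $\mathrm{span}(|\psi\rangle,|\partial_\lambda\psi\rangle)$. This is where the diagonality of $M$ enters essentially: a diagonal matrix is already in its eigenbasis, so its range is automatically a coordinate subspace. A secondary bookkeeping issue is the careful handling of $\lambda$-dependent global phases when translating between the gauge $\langle\psi|\partial_\lambda\psi\rangle=0$ and the final GHZ parameterization with the single phase factor $e^{-i\theta(\lambda)}$.
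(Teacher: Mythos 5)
Your proposal is correct and follows essentially the same route as the paper: sufficiency by direct substitution into $M=2\left(|\psi\rangle\langle\partial_{\lambda}\psi|-|\partial_{\lambda}\psi\rangle\langle\psi|+2\langle\psi|\partial_{\lambda}\psi\rangle|\psi\rangle\langle\psi|\right)$, and necessity by using that $M$ is traceless, anti-Hermitian and rank two, so diagonality forces its range onto two computational-basis vectors, after which the equal weights $|c_A|=|c_B|=1/\sqrt{2}$ and the entanglement argument give the GHZ form. The only cosmetic difference is that the paper extracts the $1/\sqrt{2}$ by comparing the two representations $M=2c\left(|\psi\rangle\langle\psi^{\perp}|-|\psi^{\perp}\rangle\langle\psi|\right)=2ic\left(|\psi_{1}\rangle\langle\psi_{1}|-|\psi_{2}\rangle\langle\psi_{2}|\right)$, whereas you impose vanishing of the $(A,B)$ off-diagonal entry directly; these are equivalent.
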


{\begin{proof}
Firstly, to prove the sufficiency, considering that for pure state $\rho(\lambda)=|\psi(\lambda)\rangle \langle \psi(\lambda)|$, $L_{\lambda}=2\partial_{\lambda}\rho(\lambda)$, $M$ is given by
\begin{equation}\label{pure state M matrix}
M=2\left(|\psi\rangle \langle \partial_{\lambda}\psi|-|\partial_{\lambda}\psi\rangle \langle \psi|+2\langle \psi|\partial_{\lambda}\psi\rangle|\psi\rangle\langle \psi|\right).
\end{equation}
Substituting $|\psi(\lambda)\rangle =|G(\lambda)\rangle$ into Eq. \eqref{pure state M matrix}, we obtain
\begin{equation}
    M=-i\partial_{\lambda}\theta(|a_{1}\cdots a_{n}\rangle\langle a_{1}\cdots a_{n}|-|b_{1}\cdots b_{n}\rangle\langle b_{1}\cdots b_{n}|),
\end{equation}
which is evidently diagonal. Continuing to prove necessity. Note that Eq. \eqref{pure state M matrix} can be further simplified to 
\begin{equation}\label{2cpsiperp}
M=2c\left(|\psi\rangle\langle \psi^{\perp}|-|\psi^{\perp}\rangle\langle \psi|\right),
\end{equation}
where $c=\sqrt{\langle \partial_{\lambda}\psi|\partial_{\lambda}\psi\rangle-\left|\langle \psi|\partial_{\lambda}\psi\rangle\right|^{2}}$, $|\psi^{\perp}\rangle\equiv \frac{1}{c}(1-|\psi\rangle \langle\psi|)|\partial_{\lambda}\psi\rangle$ satisfying $\langle \psi^{\perp}|\psi\rangle=0$. Since $|\psi\rangle$ and $|\psi^{\perp}\rangle$ are both rank-$1$, the eigenvalues of $M$ are $\lambda_{M}=(2ic,-2ic,0,0,\ldots,0)$. Combing this with the fact that $M$ is diagonal in the computational basis $V\equiv \left\{|00\cdots 00\rangle,|00\cdots01\rangle,\ldots,|11\cdots 11\rangle\right\}$, we conclude that 
\begin{equation}\label{2icpsi1psi2}
    M=2ic(|\psi_{1}\rangle\langle \psi_{1}|-|\psi_{2}\rangle\langle \psi_{2}|),
\end{equation}
where $|\psi_{1}\rangle\in V$, $|\psi_{2}\rangle\in V-\{|\psi_{1}\rangle\}$. By combining Eq. \eqref{2cpsiperp} and Eq. \eqref{2icpsi1psi2}, and utilizing the orthonormality relation, we obtain $\left|\left\langle \psi_{1} | \psi\right\rangle\right|=\left|\left\langle \psi_{2}|\psi\right\rangle\right|=\frac{1}{\sqrt{2}}$. Ignoring the global phase, we have $\left|\psi(\lambda)\right\rangle=\frac{1}{\sqrt{2}}\left(|\psi_{1}\rangle+e^{-i\theta(\lambda)}|\psi_{2}\rangle\right)$. Notice that $|\psi(\lambda)\rangle$ is entangled, thus the only possible form is $|\psi(\lambda)\rangle=(|a_{1}a_{2}\cdots a_{n}\rangle+e^{-i\theta(\lambda)}|b_{1}b_{2}\cdots
b_{n}\rangle)/\sqrt{2}$, where $a_{i}\neq b_{i}$, $a_{i},b_{i}\in \{0,1\},\forall i$. Since local optimal achievability is invariant under local unitary transformations, the restriction can be relaxed to $\langle a_{i}|b_{i}\rangle = 0$ for $\forall i$.
\end{proof}}

\section{HOC METHOD}\label{section 4}
Now we introduce another equivalent method called the {``\textit{hierarchy of orthogonality conditions}'' (HOC)} for identifying LM that saturates the QCRB. Although it does not provide an intuitive visual representation of the structure of $M$ like IMP, it is often more efficient in practical calculations.

\hspace{1cm}
\subsection{HOC in the general framework}\label{section 4a}
In the absence of specifying a specific parameter encoding method, HOC can be described as:
{\begin{thm}\label{HOC general theorem}
For an $N$-qubit system labeled by $\mathcal{N} = \{1, 2, \cdots, N\}$, the encoded state $|\psi(\lambda)\rangle$ can saturate the QCRB through LM if and only if there exist $N$ unit vectors $\boldsymbol{n}^{(i)}$ in $\mathbb{R}^{3}$ (allowing multiple vectors to point in the same direction) such that 
\begin{equation}\label{gereral QCRB}
\operatorname{Tr}\left[M \otimes_{j \in \alpha} \boldsymbol{n}^{(j)} \cdot \boldsymbol{\sigma}^{(j)}\right]=0
\end{equation}
holds for all non-empty subsets $\alpha \subseteq \mathcal{N}$, where $M \equiv \left[|\psi(\lambda)\rangle \langle \psi(\lambda)|, L_{\lambda}\right]$, and $\boldsymbol{\sigma}^{(i)} \equiv (X_{i}, Y_{i}, Z_{i})$ are the Pauli operators acting on the $i$-th qubit. When this condition is satisfied, the optimal LM can be chosen as 
\begin{equation}
    E_{x} = \bigotimes_{i=1}^{N} \frac{I^{(i)} + (-1)^{x_{i}} \boldsymbol{n}^{(i)} \cdot \boldsymbol{\sigma}^{(i)}}{2},\quad x_{i} \in \{0,1\}.
\end{equation}
\end{thm}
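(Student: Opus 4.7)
The plan is to prove both directions of the equivalence by connecting the QCRB-saturation condition $E_x M E_x=0$ from Eq.~\eqref{ExMEx} to the orthogonality conditions in Eq.~\eqref{gereral QCRB}. By the reductions already established in Sec.~\ref{section 3A}, I may restrict attention to binary rank-$1$ projective local measurements of the form $E_x = \bigotimes_{i=1}^{N}\tfrac{1}{2}\bigl(I^{(i)}+(-1)^{x_i}\boldsymbol{n}^{(i)}\cdot\boldsymbol{\sigma}^{(i)}\bigr)$ with unit $\boldsymbol{n}^{(i)}\in\mathbb{R}^{3}$, since any QCRB-saturating LM can be refined into such a measurement. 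Moreover, because each such $E_x$ is rank-$1$, the condition $E_x M E_x=0$ reduces to the scalar identity $\operatorname{Tr}(M E_x)=0$.

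For the necessity direction, I would invoke Eq.~\eqref{TrMrsigma0} directly: once a QCRB-saturating rank-$1$ local projective measurement is fixed, that equation is precisely the claimed orthogonality condition with $\boldsymbol{n}^{(j)}=\boldsymbol{n}_{1}^{(j)}$ for every nonempty $\alpha\subseteq\mathcal{N}$. Since the two Bloch vectors of a binary projective measurement differ only by a sign, a single unit vector per qubit is enough to specify the measurement and Eq.~\eqref{gereral QCRB} follows immediately.

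For sufficiency, I would expand the proposed measurement in the Pauli basis,
\begin{equation*}
E_x = \frac{1}{2^{N}}\sum_{\alpha\subseteq\mathcal{N}}\Bigl(\prod_{j\in\alpha}(-1)^{x_j}\Bigr)\bigotimes_{j\in\alpha}\boldsymbol{n}^{(j)}\cdot\boldsymbol{\sigma}^{(j)},
\end{equation*}
and compute $\operatorname{Tr}(M E_x)$ term by term. The $\alpha=\emptyset$ contribution vanishes because $\operatorname{Tr} M=0$, while the hypothesis~\eqref{gereral QCRB} kills every remaining term, so $\operatorname{Tr}(M E_x)=0$ for all $x\in\{0,1\}^{N}$. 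Combined with the rank-$1$ reduction above, this gives $E_x M E_x=0$ for all $x$, and hence QCRB saturation.

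The only delicate point is the rank-$1$ projective reduction itself: one must appeal to Sec.~\ref{section 3A} to argue that nothing is lost by restricting to binary local projective measurements whose two outcomes correspond to antipodal Bloch vectors $\pm\boldsymbol{n}^{(i)}$. Once that reduction is granted, both directions follow from the Pauli expansion of a tensor product of single-qubit projectors together with the trace-orthogonality of Pauli strings, so the remainder of the argument is essentially bookkeeping.
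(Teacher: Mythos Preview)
Your proof is correct. For sufficiency you argue exactly as the paper does: expand $E_x$ over Pauli strings, use $\operatorname{Tr} M=0$ together with Eq.~\eqref{gereral QCRB} to get $\operatorname{Tr}(ME_x)=0$, and conclude $E_xME_x=0$ since $E_x$ is rank one.

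For necessity the two proofs differ slightly in organization. You simply cite Eq.~\eqref{TrMrsigma0}, which Sec.~\ref{section 3A} already derived for any QCRB-saturating rank-$1$ local POVM, and read off $\boldsymbol{n}^{(j)}=\boldsymbol{n}_1^{(j)}$. The paper instead re-derives the decoupling from scratch inside the proof: it writes the saturation condition as $\sum_{|\alpha|\ge 1}(-1)^{\sum_{j\in\alpha}x_j}\operatorname{Tr}(M\mathcal{A}_\alpha)=0$ for all $x\in\{0,1\}^N$, then iteratively fixes one coordinate $x_k$ to $0$ and $1$, adds the two resulting equations to eliminate all subsets containing $k$, and repeats until every $\operatorname{Tr}(M\mathcal{A}_\alpha)$ is isolated. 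The underlying elimination mechanism is the same as the induction leading to Eq.~\eqref{TrMrsigma0}; your route is more economical because it reuses that prior work, whereas the paper's version is self-contained and makes the binary-case decoupling explicit.
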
}

\begin{proof}
Firstly, prove the sufficiency. It is straightforward to calculate
\begin{equation}
    \begin{aligned}
E_{x} & =\frac{1}{2^{N}}\left(I^{(1)}+(-1)^{x_{1}} \boldsymbol{n}^{(1)} \cdot \boldsymbol{\sigma}^{(1)}\right) \otimes\left(I^{(2)}+(-1)^{x_{2}} \boldsymbol{n}^{(2)} \cdot \boldsymbol{\sigma}^{(2)}\right)   \\
&\:\:\:\:\:\otimes\cdots\otimes \left(I^{(N)}+(-1)^{x_{N}} \boldsymbol{n}^{(N)}\cdot\boldsymbol{\sigma}^{(N)} \right) \\
& =\frac{1}{2^{N}}\left(I+\sum_{\alpha \subseteq \mathcal{N}, \alpha \neq \emptyset}(-1)^{\sum_{j \in \alpha} x_{j}} \bigotimes_{j \in \alpha} \boldsymbol{n}^{(j)} \cdot \boldsymbol{\sigma}^{(j)}\right).
\end{aligned}
\end{equation}
Considering that $\operatorname{Tr}M=0$, the optimal condition \eqref{ExMEx} can be expressed as
\begin{equation}\label{step 1 alphale 1}
    \sum_{\alpha \subseteq \mathcal{N},|\alpha| \geq 1}(-1)^{\sum_{j \in \alpha} x_{j}} \operatorname{Tr}M\mathcal{A}_{\alpha}=0,\:\forall x_{j},
\end{equation}
where {$\mathcal{A}_{\alpha}\equiv\otimes_{j \in \alpha} \boldsymbol{n}^{(j)} \cdot \boldsymbol{\sigma}^{(j)}$}, and $|\alpha|$ denotes the cardinality of the set $\alpha$. 

Next, we prove the necessity in several steps as follows:

\textbf{Step 1:} Upon setting $x_{N}=0$ and $x_{N}=1$, respectively, while keeping all the remaining $x_{j}$'s fixed, we find
\begin{widetext}
\begin{equation}
    \sum_{\alpha \subseteq \mathcal{N}- \{N\},|\alpha| \geq 1}(-1)^{\sum_{j \in \alpha} x_{j}} \operatorname{Tr}\left(M \mathcal{A}_{\alpha}\right)+\sum_{\beta \subseteq \mathcal{N} -\{N\},|\beta| \ge 0}(-1)^{\sum_{j \in \beta} x_{j}} \operatorname{Tr}\left(M\mathcal{A}_{\beta \cup\{N\}}\right)=0,
\end{equation}
\begin{equation}
    \sum_{\alpha \subseteq \mathcal{N} -\{N\},|\alpha| \geq 1}(-1)^{\sum_{j \in \alpha} x_{j}} \operatorname{Tr}\left(M \mathcal{A}_{\alpha}\right)-\sum_{\beta \subseteq \mathcal{N} -\{N\},|\beta| \ge 0}(-1)^{\sum_{j \in \beta} x_{j}} \operatorname{Tr}\left(M\mathcal{A}_{\beta \cup\{N\}}\right)=0.
\end{equation}
\end{widetext}
Summing over these two equations, we obtain $\sum_{\alpha \subseteq \mathcal{N} -\{N\},|\alpha| \geq 1}(-1)^{\sum_{j \in \alpha} x_{j}} \operatorname{Tr}\left(M \mathcal{A}_{\alpha}\right)=0$.
Iterating the above procedure, we find $\sum_{\alpha = \{1\}}(-1)^{\sum_{j \in \alpha} x_{j}} \operatorname{Tr}\left(M \mathcal{A} _{\alpha}\right)=0$,
i.e., $\operatorname{Tr}\left(M\mathcal{A}_{\alpha}\right)=0$. In a similar manner, one can show that 
\begin{equation}\label{step 1 alpha1}
\operatorname{Tr}\left(M\mathcal{A}_{\alpha}\right),\quad |\alpha|=1.
\end{equation}

\textbf{Step 2:} Substituting Eq. \eqref{step 1 alpha1} into Eq. \eqref{step 1 alphale 1}, we find
\begin{equation}
    \sum_{\alpha \subseteq \mathcal{N},|\alpha| \geq 2}(-1)^{\sum_{j \in \alpha} x_{j}} \operatorname{Tr}\left(M\mathcal{A}_{\alpha}\right)=0.
\end{equation}
Following similar manipulation in $\textbf{Step 1}$, one can first choose certain $x_{j}$ values and set them to be $0$ and $1$ respectively while keeping the remaining ones fixed. Next after summing over the two equations corresponding to $x_{j}=0$ and $x_{j}=1$, the index $j$ is eliminated from the equation, resulting in
\begin{equation}
    \sum_{\alpha \subseteq \mathcal{N}-\{j\} ,|\alpha| \geq 2}(-1)^{\sum_{j \in \alpha} x_{j}} \operatorname{Tr}\left(M \mathcal{A}_{\alpha}\right)=0.
\end{equation}
Iterating this process, we find $\operatorname{Tr}\left(M \mathcal{A}_{\alpha}\right)=0$, $|\alpha|=2$.

Now it is clear that upon reaching $\textbf{Step N}$, we will obtain $\operatorname{Tr}\left(M \mathcal{A}_{\alpha}\right)=0$, where $ |\alpha|=N$,  which concludes the proof.
\end{proof}

\subsection{HOC in the unitary encoding framework}\label{section 4b}
If we constrain the parameter encoding to be unitary encoding, where $U_{\lambda}$
  represents the time-dependent unitary encoding quantum channel that satisfies the Schr\"odinger equation $i\partial_{t}U_{\lambda}(t)=H_{\lambda}(t) U_{\lambda}(t)$, which describes the most general unitary encoding process and includes the common case $U_{\lambda} = e^{-i \lambda t H}$. To make our notation more concise, we define the metrological generator as $G_{\lambda}(t) \equiv i U^{\dagger}_{\lambda}(t) \partial_{\lambda} U_{\lambda}(t)$ \cite{PhysRevLett.98.090401,NatCommun8}. {Note that $\partial_{\lambda}U_{\lambda }\left(t\right)=(-i) \int_{0}^{t} \partial_{\lambda}\left (H_{\lambda}\left(s\right) U_{\lambda}(s)\right)d s$, which implies $G_{\lambda}(t=0)=0$. Therefore, $G_{\lambda}(t)$ can be expanded into 
\begin{equation}
    \begin{aligned}
G_{\lambda}(t)&=G_{\lambda}(t=0)+\int_{0}^{t}ds \partial_{s}(iU^{\dagger}_{\lambda}\partial_{\lambda}U_{\lambda})\\
&=\int_{0}^{t}ds \partial_{s}(iU^{\dagger}_{\lambda}\partial_{\lambda}U_{\lambda})\\
&=\int_{0}^{t}ds  U_{\lambda}^{\dagger}(s) \partial_{\lambda} H_{\lambda}(s) U_{\lambda}(s).
\end{aligned}
\end{equation}}

Now we are ready to provide the HOC for the LM saturation of the QCRB in the unitary phase encoding case.
\begin{thm}
The time-dependent unitary phase-encoding parameter estimation model $|\psi(\lambda)\rangle = U_{\lambda}(t)|\psi_{0}\rangle$ can saturate the Quantum Cram\'er-Rao Bound through local measurements  if and only if 
\begin{equation}\label{COVNGlambdat}
\operatorname{Cov}\left(\mathcal{A}_{\alpha}^{(H)}(t) G_{\lambda}(t)\right)_{\left|\psi_{0}\right\rangle} = 0
\end{equation}
holds for $\forall \alpha \subseteq \mathcal{N} \equiv \{1, 2, \cdots, N\}$, where  $\mathcal{A}_{\alpha}^{(H)}(t)\equiv U^{\dagger}_{\lambda}(t)\mathcal{A}_{\alpha}U_{\lambda}(t)$ is  the Heisenberg evolution of $\mathcal{A}_{\alpha} \equiv \otimes_{n \in \alpha} \boldsymbol{n}^{(j)} \cdot \boldsymbol{\sigma}^{(j)}$ and $\operatorname{Cov}(A B)_{\left|\psi_{0}\right\rangle} \equiv \frac{1}{2}\langle AB+BA\rangle_{\left|\psi_{0}\right\rangle}-\langle A\rangle_{|\psi_{0}\rangle}\langle B\rangle_{\left|\psi_{0}\right\rangle}$.
\end{thm}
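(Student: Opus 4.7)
The plan is to reduce this to Theorem~\ref{HOC general theorem} by rewriting the general HOC trace condition in the language natural to unitary encoding. Since Theorem~\ref{HOC general theorem} already gives an iff characterization of QCRB-saturating LM for arbitrary pure-state encodings, it will suffice to prove the single algebraic identity
\[
\operatorname{Tr}[M\mathcal{A}_\alpha]\;=\;4i\,\operatorname{Cov}\bigl(\mathcal{A}_\alpha^{(H)}(t)\,G_\lambda(t)\bigr)_{|\psi_0\rangle},\qquad\forall\,\alpha\subseteq\mathcal{N},
\]
which automatically turns $\operatorname{Tr}[M\mathcal{A}_\alpha]=0$ into condition~\eqref{COVNGlambdat}. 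Note that for empty $\alpha$ one has $\mathcal{A}_\emptyset=I$, whose covariance with any operator vanishes identically, so the ``all $\alpha$'' in the present theorem is consistent with the ``non-empty $\alpha$'' appearing in Theorem~\ref{HOC general theorem}.

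To prove the identity I would first use $G_\lambda(t)=iU_\lambda^\dagger(t)\partial_\lambda U_\lambda(t)$ to obtain $\partial_\lambda U_\lambda=-iU_\lambda G_\lambda$, which gives $|\partial_\lambda\psi\rangle=-iU_\lambda G_\lambda|\psi_0\rangle$ together with $\langle\psi|\partial_\lambda\psi\rangle=-i\langle G_\lambda\rangle_{|\psi_0\rangle}$ by unitarity. Substituting these into the pure-state formula \eqref{pure state M matrix} yields
\[
M=2i\,U_\lambda\!\bigl(|\psi_0\rangle\langle\psi_0|\,G_\lambda+G_\lambda\,|\psi_0\rangle\langle\psi_0|-2\langle G_\lambda\rangle_{|\psi_0\rangle}|\psi_0\rangle\langle\psi_0|\bigr)U_\lambda^\dagger.
\]
Taking the trace against $\mathcal{A}_\alpha$ and using cyclicity to move $U_\lambda^\dagger$ past $\mathcal{A}_\alpha$ converts the Schr\"odinger-picture tensor-product observable into its Heisenberg form $\mathcal{A}_\alpha^{(H)}(t)=U_\lambda^\dagger\mathcal{A}_\alpha U_\lambda$, and all expectation values end up being taken in $|\psi_0\rangle$. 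The three surviving terms regroup exactly into $\tfrac{1}{2}\langle\{\mathcal{A}_\alpha^{(H)},G_\lambda\}\rangle_{|\psi_0\rangle}-\langle\mathcal{A}_\alpha^{(H)}\rangle_{|\psi_0\rangle}\langle G_\lambda\rangle_{|\psi_0\rangle}$, i.e., the symmetric covariance. Invoking Theorem~\ref{HOC general theorem} then delivers both directions of the claimed equivalence at once.

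The step I expect to require the most bookkeeping care is the handling of the disconnected contribution $-2\langle G_\lambda\rangle\,|\psi_0\rangle\langle\psi_0|$ inherited from the $\langle\psi|\partial_\lambda\psi\rangle\,|\psi\rangle\langle\psi|$ piece of \eqref{pure state M matrix}: dropping it would leave only the bare anticommutator expectation $\tfrac{1}{2}\langle\{\mathcal{A}_\alpha^{(H)},G_\lambda\}\rangle$, and the claimed iff would then fail whenever $\langle G_\lambda\rangle_{|\psi_0\rangle}\neq 0$. One must equally track the overall factor of $i$ carefully so that the trace lands on the \emph{symmetric} (covariance) combination rather than on the antisymmetric one (commutator), which is generically nonzero and would therefore never characterize local saturation. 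Everything else is a routine cyclicity-of-trace rearrangement.
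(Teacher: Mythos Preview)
Your proposal is correct and follows essentially the same strategy as the paper: both reduce the claim to Theorem~\ref{HOC general theorem} by expressing $\operatorname{Tr}[M\mathcal{A}_\alpha]$ in terms of $G_\lambda$ and $\mathcal{A}_\alpha^{(H)}$ acting on $|\psi_0\rangle$. The only cosmetic difference is in how $M$ is computed: the paper uses $M=2[\rho_\lambda,\partial_\lambda\rho_\lambda]=-2iU_\lambda[\rho_0,[G_\lambda,\rho_0]]U_\lambda^\dagger$ and then expands the double commutator, whereas you substitute directly into the pure-state formula~\eqref{pure state M matrix} and obtain the anticommutator form $M=2iU_\lambda(\{\rho_0,G_\lambda\}-2\langle G_\lambda\rangle\rho_0)U_\lambda^\dagger$; the two expressions coincide for pure $\rho_0$ and lead to the same covariance identity.
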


\begin{proof}
Considering that $\rho(\lambda)=U_{\lambda} \rho_{0} U_{\lambda}^{\dagger}$, where $\rho_{0}=|\psi_{0}\rangle \langle \psi_{0}|$, it is straightforward to compute
\begin{equation}
    \partial_{\lambda} \rho(\lambda)=\partial_{\lambda} U_{\lambda} \rho_{0} U_{\lambda}^{\dagger}+U_{\lambda} \rho_{0} \partial_{\lambda} U_{\lambda}^{\dagger}=-\mathrm{i} U_{\lambda}[G_{\lambda}, \rho_{0}] U_{\lambda}^{\dagger},
\end{equation}
where we used $\partial_{\lambda}U_{\lambda}U_{\lambda}^{\dagger}+U_{\lambda}\partial_{\lambda}U_{\lambda}^{\dagger}=0$. Thus
\begin{equation}\label{unitary M}
    M=2[\rho(\lambda), L_{\lambda}]=2[\rho(\lambda), \partial_{\lambda} \rho(\lambda)]=-2i U_{\lambda}[\rho_{0},[G_{\lambda}, \rho_{0}]] U_{\lambda}^{\dagger}.
\end{equation}
Substituting Eq. \eqref{unitary M} into Eq. \eqref{gereral QCRB}, we arrive at $\operatorname{Tr}\left([\rho_{0},[G_{\lambda}, \rho_{0}]] \mathcal{A}_{\alpha}^{(H)}\right)=0$, which can be further rewritten as
\begin{equation}
    \begin{aligned}
0&=\operatorname{Tr}\left(\left[G_{\lambda}, \rho_{0}\right]\left[\mathcal{A}_{\alpha}^{(H)}, \rho_{0}\right]\right)\\
&=2\operatorname{Tr}\left(G_{\lambda}\rho_{0}\mathcal{A}_{\alpha}\rho_{0}\right)-\operatorname{Tr} \left(\mathcal{A}_{\alpha}^{(H)}G_{\lambda}\rho_{0}\right)-\operatorname{Tr} \left(G_{\lambda}\mathcal{A}_{\alpha}^{(H)}\rho_{0}\right)\\
&=2\langle G_{\lambda}\rangle_{|\psi_{0}\rangle}\langle \mathcal{A}_{\alpha}^{(H)}\rangle_{|\psi_{0}\rangle}-\langle\mathcal{A}_{\alpha}^{(H)} G_{\lambda}+G_{\lambda}\mathcal{A}_{\alpha}^{(H)} \rangle_{\left|\psi_{0}\right\rangle}.
\end{aligned}
\end{equation}
\end{proof}

\subsection{The application of HOC}\label{section 4c}
In this subsection, we will provide some illustrative examples to demonstrate the effectiveness of the HOC method in studying the saturation of the QCRB for LM. This includes a {metrological example of locally optimal achievable cases for a three-qubit W state and a class of $N$-qubit W states}, as well as a locally optimal unreachable example for a three-qubit W state.

{\textit{Example 1 (locally optimal achievable).} The first locally optimal achievable example is the phase estimation model for a three-qubit system, defined by $|\psi(\lambda)\rangle = e^{-i\lambda H}|\psi_{0}\rangle$ (we set $t=1$ for convenience), where the probe state is $|\psi_{0}\rangle =|W^{(3)}\rangle\equiv  \frac{1}{\sqrt{3}}(|100\rangle + |010\rangle + |001\rangle)$ and the encoding Hamiltonian is $H = X_{1}X_{2} + X_{2}X_{3}$. We observe that for $\forall \lambda$, there exists an optimal LM $E_{x}=\bigotimes_{j=1}^{3}\frac{I^{(i)}+(-1)^{x_{i}}(\cos \alpha_{j}X_{j}+\sin \alpha_{j} Y_{j})}{2}$ that saturates the QCRB.
To see this and clarify the value of $\alpha_{j}$, it is observed that for  $U_{\lambda} = e^{-i\lambda H}$, ignoring an irrelevant global factor, which does not affect the fulfillment of the optimal condition \eqref{gereral QCRB}, the $M$ matrix can be expressed as 
\begin{equation}
    M = \left\{H, \rho(\lambda)\right\} - 2\langle\psi_{0}|H| \psi_{0}\rangle\rho(\lambda)= \left\{h, \rho(\lambda)\right\}
\end{equation}
where $\rho(\lambda)=|\psi(\lambda)\rangle \langle \psi(\lambda)|$, $h\equiv H-\langle \psi_{0}|H|\psi_{0}\rangle$, and $\{A,B\}\equiv AB+BA$. Direct calculation shows that for any $j\in \{1,2,3\}$, $\boldsymbol{m}^{(j)}$ defined by $\boldsymbol{m}^{(j)}\cdot \boldsymbol{\sigma}^{(j)}=\operatorname{Tr}_{\bar{j}}\left(M\right)$ satisfies $\boldsymbol{m}^{(j)}\propto  \hat{\boldsymbol{z}}$, where $\boldsymbol{\sigma}^{(j)}\equiv(X_{j},Y_{j},Z_{j})$. Therefore, when $|\alpha|=1$, the optimal condition \eqref{gereral QCRB} can be expressed as $\operatorname{Tr}\left[M\left(\boldsymbol{n}^{(j)}\cdot \boldsymbol{\sigma}^{(j)}\right)\right]=0$, indicating that $\boldsymbol{n}^{(j)}$ must be on X-Y plane, i.e., $\boldsymbol{n}^{(j)}= \cos \alpha_{j}\hat{\boldsymbol{x}}+ \sin \alpha_{j}\hat{\boldsymbol{y}}$.
For the case of $|\alpha|=2$, we have
\begin{equation}
\operatorname{Tr} M \mathcal{A}_{j k}=\operatorname{Tr}\left\{h, \mathcal{A}_{j k}\right\}\rho(\lambda)=2\left(\cos \alpha_{j}, \sin \alpha_{j}\right) T_{j k}\binom{\cos \alpha_{k}}{\sin \alpha_{k}},
\end{equation}
where 
\begin{equation}
    T_{j k}\equiv\frac{1}{2}\left(\begin{array}{cc}
\left\langle\left\{h, X_{j k}\right\}\right\rangle_{\lambda} & \left\langle\left\{h, X_{j} Y_{k}\right\}\right\rangle_{\lambda} \\
\left\langle\left\{h, Y_{j} X_{k}\right\}\right\rangle_{\lambda} & \left\langle\left\{h, Y_{j k}\right\}\right\rangle_{\lambda}
\end{array}\right),
\end{equation}
where $\langle A \rangle_{\lambda} \equiv \operatorname{Tr}A\rho(\lambda)$. As $-Z_{123}\equiv -Z_{1}Z_{2}Z_{3}$ is a symmetry, i.e., $[H,Z_{123}]=0$ and stabilizes the state $|\psi_{\lambda}\rangle $ but anticommutes with $N_{123}$, we have $\operatorname{Tr} M \mathcal{A}_{123}=\left\langle\left\{h, \mathcal{A}_{123}\right\} \right\rangle_{\lambda}
=-\left\langle\left\{h, \mathcal{A}_{123}\right\} Z_{123}\right\rangle_{\lambda}=\left\langle Z_{123}\left\{h, \mathcal{A}_{123}\right\}\right\rangle_{\lambda}=0$. Thus we have only to check
whether there are $\left\{|\alpha_{j}\rangle\equiv \left(\cos \alpha_{j}, \sin \alpha_{j}\right)\right\}_{j=1}^{3}$ such that $\langle\alpha_{j}|T_{j k}| \alpha_{k}\rangle=0$ for all $j>k$. Note that $T_{jk}=T^{\dagger}_{kj}$. By swapping symmetry $V_{13}$ we have $T_{12}=T_{32}$, $T_{13}=T_{31}$. Direct calculation shows
\begin{equation}
    T_{12}=\frac{1}{9}\left(\begin{array}{cc}
7 & 12 \sin 2 \lambda\cos 2 \lambda \\
-2 \sin 2 \lambda & -5 \cos 2 \lambda
\end{array}\right),
\end{equation}
\begin{equation}
    T_{13}=\frac{1}{9}\left(\begin{array}{cc}
4 & 5 \sin 2 \lambda \\
5 \sin 2 \lambda & 4 \sin ^{2} 2 \lambda-8 \cos ^{2} 2 \lambda
\end{array}\right),
\end{equation}
\begin{equation}
\operatorname{Det} T_{13}=-\frac{1}{81}\left(32 \cos ^{2} 2 \lambda+9 \sin ^{2} 2 \lambda\right)<0.
\end{equation}
By denoting 
\begin{equation}
    T\equiv T_{21} Y T_{13} Y  T_{32}= \begin{pmatrix}
T_{xx} & T_{xy} \\
T_{yx} & T_{yy}
\end{pmatrix},
\end{equation}
where $Y$ is the Pauli-Y matrix, we have
\begin{equation}
    \left\{
\begin{array}{l}
T_{xx} = -\frac{4}{729} (5 + 93 \cos 4\lambda), \\
T_{xy} =T_{yx}= -\frac{\left(-107 + 564 \cos4\lambda\right)\sin4\lambda}{1458},  \\
T_{yy} =\frac{4}{729} \left(118 - 147 \cos4\lambda + 54 \cos8\lambda\right) \cos^2 2\lambda, \\
\end{array}
\right.
\end{equation}
\begin{equation}\label{DetT le 0}
    \operatorname{Det} T=T_{xx}T_{yy}-T_{xy}T_{yx}=\left(\operatorname{Det} T_{12}\right)^{2} \operatorname{Det} T_{13} \leq 0,
\end{equation}
which infers that there exists a unit vector $|\alpha_{2}\rangle $ such that $\left\langle\alpha_{2}|T| \alpha_{2}\right\rangle=0$.
To see this, let $|\alpha_{2}\rangle =(\cos \alpha_{2},\sin \alpha_{2})$ then the above equation becomes $T_{x x} \cos ^{2} \alpha_{2}+T_{y y} \sin ^{2} \alpha_{2}+2 T_{x y} \sin \alpha_{2} \cos \alpha_{2}=0$,
which is equivalent to $T_{x x} s^{2}+T_{y y}+2 T_{x y} s=0$ with $s\equiv \cot \alpha_{2}$. Obviously, it has a real solution
\begin{equation}
    s=\frac{-T_{x y} \pm \sqrt{T_{x y}^{2}-T_{x x} T_{y y}}}{T_{x x}}
\end{equation}
if and only if $T_{x y}^{2}-T_{x x} T_{y y} \geq 0$, combining Eq. \eqref{DetT le 0}, this condition is exactly satisfied.}

{Let $\left|\alpha_{1}\right\rangle \propto i Y \cdot T_{12}\left|\alpha_{2}\right\rangle$ and $\left|\alpha_{3}\right\rangle \propto i Y \cdot T_{32}\left|\alpha_{2}\right\rangle$ (with $T_{12}|\alpha_{2}\rangle \neq 0$), we have
\begin{equation}
    \begin{aligned}
        \left\langle\alpha_{2}\left|T_{21}\right| \alpha_{1}\right\rangle &=\left\langle\alpha_{2}\left|T_{21} \cdot Y \cdot T_{12}\right| \alpha_{2}\right\rangle=0, \\
       \left\langle\alpha_{2}\left|T_{23}\right| \alpha_{3}\right\rangle &=\left\langle\alpha_{2}\left|T_{23} \cdot Y \cdot T_{32}\right| \alpha_{2}\right\rangle=0,\\
\left\langle\alpha_{1}\left|T_{13}\right| \alpha_{3}\right\rangle&=\left\langle\alpha_{2}\left|T_{21} \cdot Y\cdot T_{13} \cdot Y \cdot T_{32}\right| \alpha_{2}\right\rangle=0.
   \end{aligned}
\end{equation}
This proves all the cases where $T_{k2}|\alpha_{2}\rangle \neq 0$, $k=1,3$. Otherwise, $T_{12}\left|\alpha_{2}\right\rangle=T_{32}\left|\alpha_{2}\right\rangle=0$, let $\left\{\left|\alpha_{1}\right\rangle,\left|\alpha_{3}\right\rangle\right\}$ be two eigenvectors of $T_{13}=T_{13}^{\dagger}\neq I$ corresponding to different eigenvalues, we then obtain
$\left\langle\alpha_{1}\left|T_{13}\right| \alpha_{3}\right\rangle=\lambda_{\alpha_{3}}\left\langle\alpha_{1}| \alpha_{3}\right\rangle=0$.
It is evident that $\left\langle\alpha_{2}\left|T_{21}\right| \alpha_{1}\right\rangle=0$ and $\left\langle\alpha_{3}\left|T_{32}\right| \alpha_{2}\right\rangle=0$. By identifying $|\alpha_{j}\rangle = \left(\cos \alpha_{j}, \sin \alpha_{j}\right)$, we have the desired local optimal measurement, and the specific form can be found in Appendix \ref{app 3}.}

{\textit{Example 2 (locally optimal reachable).} The second class of locally optimal reachable $N$-qubit examples is the phase estimation model $|\psi(\lambda)\rangle = e^{-i\lambda H}|\psi_{0}\rangle $, where the probe state 
\begin{equation}
    |\psi_{0}\rangle = |\widetilde{W}^{(N)}\rangle \equiv \frac{1}{\sqrt{N}} \sum_{i=1}^{N}(-1)^{S(i)}|i\rangle=\left(\prod_{i:S(i)=1}Z_{i}\right)|W^{(N)}\rangle
\end{equation}
represents a class of generalized W states, which can be obtained from the W states $|W^{(N)}\rangle$ by applying local $Z$ gates. Here, $S(i)$ denotes the $i$-th element of a certain $N$-dimensional binary (0 or 1) vector $S$, and $|i\rangle \equiv |0\cdots 0 \underset{i\text{-qubit}}{1} 0 \cdots 0\rangle$. The encoding Hamiltonian is $H = \sum_{j=1}^{N-1}\left(X_{j} X_{j+1}+Y_{j} Y_{j+1}\right)$. We have discovered that for $\forall N =2k+1$, where $k\in \mathbb{Z},k\ge 1$, selecting a specific vector $S = \tilde{S}$
  enables the existence of a LM that saturates the QCRB for the above model at $\lambda=0$.  The above examples provide another class of locally optimal reachable multi-qubit metrological models that differs from the GHZ state. The proof of saturation, as well as the specific forms of $\tilde{S}$
  and the optimal LM, can be found in Appendix \ref{N qubit Generalized W State}.}

\textit{{Example 3 (locally optimal unachievable).}} The third locally optimal unreachable example is also in the context of the three-qubit phase estimation model $|\psi_{\lambda}\rangle = e^{-i\lambda H}|\psi_{0}\rangle$, where the probe state is $|\psi_{0}\rangle =|W^{(3)}\rangle =\frac{1}{\sqrt{3}}(|100\rangle + |010\rangle + |001\rangle)$ and the encoding Hamiltonian is $H = X_{1}X_{2} + X_{2}X_{3} + Y_{1}Y_{2} + Y_{2}Y_{3}$. It is observed that at $\lambda = 0$, there does not exist a LM that saturates the QCRB, due to the fact that the HOC cannot be simultaneously satisfied. Specifically, considering that the vector $\boldsymbol{m}^{(j)}$ defined by $\boldsymbol{m}^{(j)}\cdot \boldsymbol{\sigma}^{(j)}=\operatorname{Tr}_{\bar{j}}\left(M\right)$ satisfies $\boldsymbol{m}^{(j)}\propto  \hat{\boldsymbol{z}}$, Eq. \eqref{COVNGlambdat} implies that $\mathcal{A}_{j}=X_{j}\cos \beta_{j}+Y_{j}\sin \beta_{j}$ can only be constrained to the X-Y plane. Substituting $\mathcal{A}_{j}$
  and $\mathcal{A}_{k}$, $j\neq k$, into Eq. \eqref{COVNGlambdat} for $|\alpha|=2$, simplifying the expression gives $\cos \left(\beta_{j}-\beta_{k} \right)=0$, which are clearly incompatible with each other.

{\section{Conclusion}\label{conclusion}
In this work, we focus on achieving the QCRB in parameter estimation through local measurements. For qubit systems, we demonstrate that if there exists a local measurement that saturates the QCRB, it is always possible to construct another rank-$1$ local projective measurement that also saturates the QCRB, indicating that considering rank-$1$ local measurements is sufficient. We propose two necessary and sufficient methods, ``\textit{iterative matrix partition}'' (IMP) and ``\textit{hierarchy of orthogonality conditions}'' (HOC), to address the saturation of the QCRB for pure qubit states. IMP can quickly identify locally optimal measurements based on the structure of the $M$ matrix; typical examples include the GHZ state when $M$ is diagonal and scenarios when it is zero-diagonal. Furthermore, IMP can universally reconstruct the existing conclusion of LMCC saturating the QCRB. For HOC, we present formulations under general encoding and unitary phase encoding, providing examples of two types of locally optimal achievable cases for $N=3$ and $N \ge 3$ around W states, as well as one example of a locally optimal unachievable case.}

{We anticipate the near-term implementation of these methods in noisy intermediate scale quantum devices \cite{PhysRevLett.123.040501,npj7170,nature7170}. Future research directions include extending the methods to qudits \cite{Len2022}, continuous variable systems \cite{PhysRevA.102.052601}, and qubit-cavity systems \cite{PhysRevB.109.L041301}, as well as exploring their application in entanglement detection \cite{PhysRevA.85.022321,PhysRevA.85.022322,PhysRevA.88.014301} and spin-squeezing \cite{MA201189,Toth_2014,PhysRevA.47.5138}. Additionally, further investigations will be conducted on the impact of decoherence and other related factors.}

\section*{Acknowledgments}
We thank Sisi Zhou for useful comments on the manuscript. JY was funded by the Wallenberg Initiative on Networks and Quantum Information (WINQ). HLS was supported by the European Commission through the H2020 QuantERA ERA-NET Cofund in Quantum Technologies project ``MENTA'' and the NSFC key grants No. 12134015 and No. 92365202. SXY was supported by Key-Area Research and Development Program of Guangdong Province Grant No. 2020B0303010001.

\appendix
{\section{The Proof of Observation \ref{observation}}\label{Universality of Two-Qubit Pure States}
To prove sufficiency, Note that the fact that both diagonal elements of a $\mathit{2\times 2}$ Hermitian (or anti-Hermitian) matrix $\mathcal{A}_{j}$ are zero implies that $\boldsymbol{a}_{j}$ lies in the X-Y plane or that $\operatorname{Tr}\mathcal{A}_{j}Z=0$. If all $\boldsymbol{a}_{j}$ are on the same plane that is orthogonal to a unit normal vector $\hat{\boldsymbol{n}}$ then a rotation from $\hat{\boldsymbol{n}}$ to $\hat{\boldsymbol{z}}$ suffices to make the diagonal entries of all $\mathcal{A}_{j}$ vanishing.  The corresponding unitary reads
\begin{equation}
    U=\exp \left\{-i \frac{\theta}{2} \frac{\hat{\boldsymbol{n}} \times \hat{\boldsymbol{z}}}{|\hat{\boldsymbol{n}} \times \hat{\boldsymbol{z}}|} \cdot \boldsymbol{\sigma}\right\}
\end{equation}
with $\cos \theta=\hat{\boldsymbol{n}}\cdot \hat{\boldsymbol{z}}$ as
\begin{equation}\label{UnsigmaUdaggerZ}
    \begin{aligned}
U \hat{\boldsymbol{n}} \cdot \boldsymbol{\sigma}U^{\dagger}&=\hat{\boldsymbol{n}} \cdot \boldsymbol{\sigma}\left(\cos \theta+i \frac{\hat{\boldsymbol{n}} \times \hat{\boldsymbol{z}}}{|\hat{\boldsymbol{n}} \times \hat{\boldsymbol{z}}|} \cdot \boldsymbol{\sigma} \sin \theta\right)\\
&=(\hat{\boldsymbol{n}} \cos \theta+(\hat{\boldsymbol{n}} \times \hat{\boldsymbol{z}})\times \hat{\boldsymbol{n}} ) \cdot \boldsymbol{\sigma}\\
&=Z,
\end{aligned}
\end{equation}
where we used the relation $\exp (i \gamma \boldsymbol{u} \cdot \boldsymbol{\sigma})=\cos \gamma I+i \sin \gamma \boldsymbol{u} \cdot \boldsymbol{\sigma}$ and $\left(\boldsymbol{u}\cdot \boldsymbol{\sigma}\right)\left(\boldsymbol{v}\cdot \boldsymbol{\sigma}\right)=\boldsymbol{u}\cdot \boldsymbol{v}+i\left(\boldsymbol{u}\times \boldsymbol{v}\right)\cdot\boldsymbol{\sigma}$. By utilizing Eq. \eqref{UnsigmaUdaggerZ}, one can readily show
\begin{equation}
    \operatorname{Tr}\left(U \mathcal{A}_{j} U^{\dagger} Z\right)=\operatorname{Tr}\left(\mathcal{A}_{j} U^{\dagger} Z U\right)=\operatorname{Tr}\left[\left(\boldsymbol{a}_{j} \cdot \boldsymbol{\sigma}\right)(\hat{\boldsymbol{n}} \cdot \boldsymbol{\sigma})\right]=0.
\end{equation}
The proof of necessity can be derived by reversing the above argument.}

{\section{Local Optimal Measurement of Example 1}\label{app 3}
The local optimal measurement for Example 1 is given by
\begin{equation}
E_{\pm \pm \pm }=\bigotimes_{i=1}^{3}\frac{1}{2}\left[I^{(i)}\pm (\cos \alpha_{i}X_{i}+\sin\alpha_{i}Y_{i})\right],
\end{equation}
where
\begin{widetext}
\begin{equation}
 \cot \alpha_{2}= -\frac{\sqrt{2} \sqrt{(41 + 23 \cos4\lambda)(29 \cos2\lambda + 6 \cos6\lambda)^2} - 107 \sin4\lambda + 282 \sin8\lambda}{8 (5 + 93 \cos4\lambda)},
\end{equation}
\begin{equation}
   \cos\alpha_{1}= \cos \alpha_{3}=\frac{  -5 \sin\alpha_{2} \cos2\lambda - 2\cos\alpha_{2} \sin2\lambda }{\sqrt{(5\sin \alpha_{2} \cos2\lambda + 2\cos \alpha_{2} \sin2\lambda)^2 + (7\cos \alpha_{2} + 6\sin \alpha_{2} \sin4\lambda)^2}},
\end{equation}
\begin{equation}
   \sin \alpha_{1}= \sin \alpha_{3}=\frac{  -7 \cos\alpha_{2}  - 6\sin\alpha_{2} \sin4\lambda }{\sqrt{(5\sin \alpha_{2} \cos2\lambda + 2\cos \alpha_{2} \sin2\lambda)^2 + (7\cos \alpha_{2} + 6\sin \alpha_{2} \sin4\lambda)^2}}.
\end{equation}
\end{widetext}
Precise calculations show that $T_{12} |\alpha_{2}\rangle \neq 0$ holds universally, therefore it follows that $\alpha_{1} = \alpha_{3}$ always.}

{\section{The Proof of Example 2}\label{N qubit Generalized W State}}
{For $\forall i \in \{1, 2, \ldots, N-2\}$, if we impose the constraint $S(i) + S(i+2) = 1$ on $S$, then the expectation value of $H$ with respect to the probe state $|\psi_{0}\rangle$ can be simplified to
\begin{equation}
    \langle H\rangle_{\left|\psi_{0}\right\rangle}=\frac{2}{N}\left[(-1)^{S(1)+S(2)}+(-1)^{S(N-1)+S(N)}\right].
\end{equation}
If we further constrain $S$ to ensure $\langle H\rangle_{\left|\psi_{0}\right\rangle}=0$, then $S$ should satisfy $(-1)^{S(1)+S(2)}+(-1)^{S(N-1)+S(N)}=0$. Combining the above two constraint conditions, for all $N\ge 3$ and $N$ being odd, we can set $\tilde{S}(1)=1$, $\tilde{S}(2)=0$, and determine $\tilde{S}(3)$ up to $\tilde{S}(N)$ by repeatedly using the condition $\tilde{S}(i)+\tilde{S}(i+2)=1$. This choice of $\tilde{S}$ satisfies the equations $(-1)^{\tilde{S}(1)+\tilde{S}(2)} + (-1)^{\tilde{S}(N-1)+\tilde{S}(N)} = 0$ and $(-1)^{\tilde{S}(2)+\tilde{S}(N)} + (-1)^{\tilde{S}(1)+\tilde{S}(N-1)} = 0$, which can help prove the saturation of the QCRB in subsequent analysis. Taking into account $\langle H\rangle_{\left|\psi_{0}\right\rangle}=0$ and $\lambda=0$, the condition \eqref{COVNGlambdat} can be written as
\begin{equation}\label{lambda0COV}
     \left\langle \mathcal{A}_{\alpha} H+H\mathcal{A}_{\alpha}\right\rangle_{\left|\psi_{0}\right\rangle}=0, \quad \forall \alpha \subseteq \mathcal{N}.
\end{equation}
Below we prove that when the LM is chosen as $\boldsymbol{n}^{(1)}=\boldsymbol{n}^{(N)}=\boldsymbol{\hat{x}}$ and  $\boldsymbol{n}^{(2)}=\cdots =\boldsymbol{n}^{(N-1)}=\boldsymbol{\hat{z}}$, Eq. \eqref{lambda0COV} holds.} 

{When $|\{1,N\} \cap \alpha| = 0$, we have
\begin{equation}
    \begin{aligned}
\langle H\mathcal{A}_{\alpha}\rangle _{|\psi_{0}\rangle}&=\frac{2}{\sqrt{N}}\left[ (-1)^{\tilde{S}(2)}\langle1|Z_{\alpha}|\psi_{0}\rangle+ (-1)^{\tilde{S}(N-1)}\langle N|Z_{\alpha}|\psi_{0}\rangle\right]\\
&=\frac{2}{\sqrt{N}} (-1)^{\tilde{S}(2)}\langle1|\psi_{0}\rangle+\frac{2}{\sqrt{N}} (-1)^{\tilde{S}(N-1)}\langle N|\psi_{0}\rangle\\
&\propto\left[(-1)^{\tilde{S}(1)+\tilde{S}(2)}+(-1)^{\tilde{S}(N-1)+\tilde{S}(N)}\right]=0,
\end{aligned}
\end{equation}
therefore, $\left\langle \mathcal{A}_{\alpha} H+H\mathcal{A}_{\alpha}\right\rangle_{\left|\psi_{0}\right\rangle}=0$.}

{When $|\{1,N\} \cap \alpha| = 1$, since at this point, $\operatorname{span}\{|1\rangle, |2\rangle, \ldots, |N\rangle\}$ lies in the null space of $\mathcal{A}_{\alpha}|\psi_{0}\rangle$, thus $\left\langle\psi_{0}\left|\mathcal{A}_{\alpha} H\right| \psi_{0}\right\rangle=\left\langle\psi_{0}\left|H \mathcal{A}_{\alpha}\right| \psi_{0}\right\rangle=0$.}

{When $|\{1,N\} \cap \alpha| = 2$, we obtain
\begin{equation}
\begin{aligned}
\langle H\mathcal{A}_{\alpha}\rangle _{|\psi_{0}\rangle}&=\frac{2}{\sqrt{N}}\left[ (-1)^{\tilde{S}(2)}\langle1|\mathcal{A}_{\alpha}|\psi_{0}\rangle+ (-1)^{\tilde{S}(N-1)}\langle N|\mathcal{A}_{\alpha}|\psi_{0}\rangle\right]\\
&=\frac{2}{\sqrt{N}} (-1)^{\tilde{S}(2)}\langle1|\psi_{0}\rangle+\frac{2}{\sqrt{N}} (-1)^{\tilde{S}(N-1)}\langle N|\psi_{0}\rangle\\
&\propto \left[(-1)^{\tilde{S}(2)+\tilde{S}(N)}+(-1)^{\tilde{S}(1)+\tilde{S}(N-1)}\right]=0.
\end{aligned}
\end{equation}
Based on the above, for $\forall N \ge 3$, as long as $S = \tilde{S}$, there always exists an LM given by 
\begin{equation}
E_{x} = \bigotimes_{i=1}^{N} \frac{I^{(i)} +(-1)^{x_{i}} \boldsymbol{n}^{(i)} \cdot \boldsymbol{\sigma}^{(i)}}{2}, \quad x_{i}\in \{0,1\},
\end{equation}
where $\boldsymbol{n}^{(1)} = \boldsymbol{n}^{(N)} = \boldsymbol{\hat{x}}$ and $\boldsymbol{n}^{(2)} = \cdots = \boldsymbol{n}^{(N-1)} = \boldsymbol{\hat{z}}$, saturating the QCRB.}

\bibliography{dsg} 
\end{document}